\newcounter{note}[section]
\newcommand{\venkat}[1]{\refstepcounter{note}$\ll${\bf Venkat~\thenote:}
  {\sf #1}$\gg$\marginpar{\tiny\bf VG~\thenote}}
\newtheorem{theorem}{Theorem}[section]
\newtheorem{conjecture}[theorem]{Conjecture}
\newtheorem{remark}[theorem]{Remark}
\newtheorem{defn}[theorem]{Definition}
\newcommand{\F}{\mathbb{F}}
\newcommand{\eat}[1]{}
\newcommand{\tr}{\mathrm{tr}}
\newcommand{\calC}{\mathcal{C}}
\newcommand{\calR}{\mathcal{R}}
\newcommand{\calP}{\mathcal{P}}
\newcommand{\dual}{\calC^{\bot}}
\begin{document}

\title{Near-optimal Repair of Reed-Solomon Codes with Low Sub-packetization}

\author{%
  \IEEEauthorblockN{Venkatesan Guruswami}
  \IEEEauthorblockA{
		Computer Science Department \\
		Carnegie Mellon University \\
                Email: venkatg@cs.cmu.edu}
\and
  \IEEEauthorblockN{Haotian Jiang}
  \IEEEauthorblockA{Paul G. Allen School of Computer Science \& Engineering \\
  University of Washington \\
  Email: jhtdavid@uw.edu}
  }

\maketitle

\begin{abstract}
  Minimum storage regenerating (MSR) codes are MDS codes which allow for recovery of any single erased symbol with optimal repair bandwidth, based on the smallest possible fraction of the contents downloaded from each of the other symbols. Recently, certain Reed-Solomon codes were constructed which are MSR. However, the sub-packetization of these codes is exponentially large, growing like $n^{\Omega(n)}$ in the constant-rate regime. In this work, we study the relaxed notion of $\epsilon$-MSR codes, which incur a factor of $(1+\epsilon)$ higher than the optimal repair bandwidth, in the context of Reed-Solomon codes. We give constructions of constant-rate $\epsilon$-MSR Reed-Solomon codes with polynomial sub-packetization of $n^{O(1/\epsilon)}$ and thereby giving an explicit tradeoff between the repair bandwidth and sub-packetization.
  
\end{abstract}



\section{Introduction}
In practical distributed storage systems, data is stored in encoded form on a large number of individual storage nodes to protect it from node failure.
The original file is encoded using an $[n,k]$ code and is distributed among $n$ different storage units. Each one of the storage units is called a \emph{node} and contains one symbol of the codeword.
Maximum Distance Separable (MDS) codes are widely used in this case because they achieve the highest possible error-correction capability for a given redundancy level --- the codeword can be 
fully recovered by accessing any $k$ of its $n$ entries.
Reed-Solomon (RS) codes are among the most commonly used MDS codes, including in storage applications.


In practice, however, the most common scenario is the failure of a single node~\cite{rashmi2015hitchhiker}.
To repair a single node failure, the system needs to download information from some other \emph{helper} nodes of the codeword, and the total amount of downloaded data is called the \emph{repair bandwidth}.
By the MDS property, it is possible to recover the content of the entire file, and therefore that of the failed node, by downloading the entire content stored on any $k$ nodes. (On the other hand, the same MDS property means that one cannot recover any information about the failed node if one contacts fewer than $k$ nodes.)
However, this naive approach can be far from optimal and it is shown in~\cite{dimakis2010network} that one can save repair bandwidth by contacting $d > k$ helper nodes and downloading only \emph{part} of the information stored on each of those helper nodes. Codes with this property are called regenerating codes.
In fact, one can attain the maximum savings by contacting all other $n-1$ nodes (i.e. $d=n-1$).
We shall therefore focus on the case when $d = n-1$ throughout this paper.

Initial constructions of regenerating codes were \emph{vector} MDS codes, whose symbols are vectors in $\F^l$ for some field $\F$, with the codes being $\F$-linear. The quantity $l$ is called the \emph{sub-packetization} of the code.  We focus on \emph{linear repair schemes} where the helper nodes then return (much fewer than $l$) $\F$-linear combinations of the vector stored at them.


Shanmugam et al.~\cite{shanmugam2014repair} first propose a framework for studying the repair bandwidth of \emph{scalar} MDS codes. In this framework, each node contains a symbol of some finite symbol field $E$, which is a degree $l$ extension of some base field $\F_q$ (i.e. $[E:\F_q] = l$). The code itself is a linear MDS code over the bigger field $E$.
The symbol field $E$ can thus be viewed as a $l$-dimensional vector space on $\F_q$ and the code can also be viewed as a vector code over $\F_q$ with sub-packetization $l$. 
When a helper node is contacted, instead of returning a symbol of the symbol field $E$, it returns sub-symbols of the base filed $\F_q$.
The repair bandwidth is formally defined as the total amount of sub-symbols of $\F_q$ downloaded from all the helper nodes.

\begin{defn}[Repair bandwidth]
Let $\calC$ be an $[n,k]$ MDS code with sub-packetization $l$ over a finite subfield $\F_q$. For $i \in \{ 1,2 \cdots, n\}$ and $\calR \subseteq [n] \backslash \{i\}$, define $N(\calC,i,\calR)$ as the smallest number of sub-symbols of $\F_q$ (which can be linear combinations of the sub-symbols stored in the helper nodes) one needs to download from the helper nodes $\{c_j: j \in \calR \}$ in order to repair the failed node $c_i$. The repair bandwidth of the code $\calC$ is defined as
\[
\max_{ i \in [n]} N(\calC, i,[n] \backslash \{i\})
\]
\end{defn}
For any MDS code, Dimakis et al.~\cite{dimakis2010network} provides a benchmark for the repair bandwidth by giving an achievable lower bound, known as the \emph{cut-set bound}. 
Codes that achieve this lower bound are known as Minimum Storage Regenerating (MSR) codes.  

\begin{defn}[Cut-set bound and MSR codes]
\label{defn:cut-setbound}
Let $\calC$ be an $[n,k]$ (scalar or vector) MDS code with sub-packetization $l$ over some base field $\F_q$. For any $i \in [n]$ and any subset $\calR \subseteq [n] \backslash \{i\}$, we have the following inequality:
\[
N(\calC,i,\calR) \geq \frac{(n-1)l}{n-k} \ ,
\]
with equality attained if and only if each node in $[n] \backslash \{i\}$ returns $\frac{l}{n-k}$ symbols of $\F_q$. A code achieving this lower bound is called a minimum storage regenerating (MSR) code.
\end{defn}
By now several constructions of MSR codes are known. However, the sub-packetization of these constructions is large. In particular, the constructions in \cite{YeB17b,SVK16} are explicit with small field size, and achieve sub-packetization $l \approx r^{n/r}$. Such a large sub-packetization has been shown to be inherent to MSR codes, with a lower bound of $\exp(\Omega(\sqrt{n/r}))$~\cite{goparaju2014improved} which was recently improved to a near-optimal  $\exp(\Omega(n/r))$ lower bound~\cite{AlrabiahGuruswami18}.

While essentially optimal MSR codes have now been constructed, these are tailormade constructions of \emph{vector} MDS codes. It is of significant theoretical and practical interest to study whether (scalar) MDS codes like Reed-Solomon codes, which are already widely used in practice, can allow for efficient regeneration of a failed node. There have been several recent works in this direction. Guruswami and Wootters~\cite{guruswami2017repairing} give an exact characterization for linear repair schemes of scalar MDS codes using dual codes. They show that to obtain a low repair bandwidth, it suffices to find a set of dual codewords that span the entire field $E$ on the failed node, but have low dimension on other nodes.

\begin{defn}[Dual Code]
  The dual code of a linear code $\calC \subseteq E^n$ is the linear subspace of $E^n$ defined by $\dual = \{ x = (x_1,\cdots,x_n) \in E^n  \mid \sum_{i=1}^n x_i c_i = 0 \ \forall c=(c_1,\cdots,c_n) \in \calC \}$.
  \end{defn}

\begin{theorem}[~\cite{guruswami2017repairing}]
Let $C \subseteq E^n$ be a scalar linear MDS code of length $n$. Let $\F_q$ be a subfield of $E$ such that $[E:\F_q] = l$. For a given $i \in \{1,\cdots,n\}$, the following statements are equivalent.\\
(1) There is a linear repair scheme of node $c_i$ over $\F_q$ such that the repair bandwidth $N(\calC,i,[n]\backslash \{i\}) \leq b$.\\
(2) There is a subset of codewords $\calP_i \subseteq \dual$ with size $|\calP_i| = l$ such that
\[
\dim_{\F_q}(\{x_i : x\in \calP_i\}) = l,
\]
and
\[
b \geq \sum_{j \in [n] \backslash \{i\}} \dim_{\F_q}(\{x_j: x \in \calP_i\})
\]
\end{theorem}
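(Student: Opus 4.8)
The plan is to prove the two implications through the standard dictionary between $\F_q$-linear functionals on $E$ and elements of $E$: every $\F_q$-linear map $E \to \F_q$ has the form $\alpha \mapsto \tr_{E/\F_q}(\beta\alpha)$ for a unique $\beta \in E$, and $\beta \mapsto \tr_{E/\F_q}(\beta\,\cdot\,)$ is an $\F_q$-linear isomorphism from $E$ onto its dual space. Since a linear repair scheme over $\F_q$ downloads $\F_q$-linear functions of the helper symbols and reconstructs $c_i \in E \cong \F_q^{l}$ by $\F_q$-linear post-processing, reconstructing $c_i$ is equivalent to reconstructing the $l$ scalars $\tr(\mu_1 c_i),\dots,\tr(\mu_l c_i)$ for any fixed $\F_q$-basis $\mu_1,\dots,\mu_l$ of $E$. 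I will translate both the scheme and the dual-codeword condition into statements about these trace values.

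For $(2)\Rightarrow(1)$, write $\calP_i = \{x^{(1)},\dots,x^{(l)}\}$; since $\dim_{\F_q}(\{x^{(t)}_i : t\}) = l$ and there are exactly $l$ of them, the elements $\mu_t := x^{(t)}_i$ are automatically an $\F_q$-basis of $E$. Each $x^{(t)} \in \dual$ gives $\sum_{j} x^{(t)}_j c_j = 0$ for every $c \in \calC$, hence
\[
\tr(\mu_t c_i) \;=\; -\sum_{j \in [n]\setminus\{i\}} \tr\!\bigl(x^{(t)}_j c_j\bigr) .
\]
So it suffices for helper node $j$ to supply enough of $c_j$ to evaluate the $l$ functionals $c_j \mapsto \tr(x^{(t)}_j c_j)$. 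By the dictionary above these functionals span an $\F_q$-space of dimension exactly $\dim_{\F_q}(\{x_j : x \in \calP_i\})$, so node $j$ need only send the values at $c_j$ of a basis of that space --- that many $\F_q$-sub-symbols --- from which all $\tr(x^{(t)}_j c_j)$ follow. Summing over $j \ne i$ gives repair bandwidth $\sum_{j\ne i}\dim_{\F_q}(\{x_j : x\in\calP_i\}) \le b$, and the recovered values $\tr(\mu_t c_i)$ determine $c_i$.

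For $(1)\Rightarrow(2)$, suppose a scheme of bandwidth $\le b$ has node $j$ reveal $b_j$ $\F_q$-linear functionals of $c_j$, which we may take $\F_q$-linearly independent and write as $c_j \mapsto \tr(\alpha_{j,s} c_j)$ with $\{\alpha_{j,s}\}_{s=1}^{b_j}$ independent and $\sum_{j\ne i} b_j \le b$. Reconstruction of $c_i$ means each $\tr(\mu_t c_i)$ is a fixed $\F_q$-linear combination, valid for all $c\in\calC$, of the downloaded sub-symbols; collecting terms node by node,
\[
\tr(\mu_t c_i) \;=\; \sum_{j\ne i}\tr\!\Bigl(\gamma^{(t)}_j c_j\Bigr), \qquad \gamma^{(t)}_j := \sum_{s}\lambda^{(t)}_{j,s}\,\alpha_{j,s} \in \mathrm{span}_{\F_q}\{\alpha_{j,s}\}_s ,
\]
for every $c\in\calC$, with $\lambda^{(t)}_{j,s}\in\F_q$. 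Setting $x^{(t)}_i := \mu_t$ and $x^{(t)}_j := -\gamma^{(t)}_j$ for $j \ne i$ gives $\tr\!\bigl(\sum_j x^{(t)}_j c_j\bigr) = 0$ for all $c\in\calC$. The one genuinely nontrivial point, which I expect to be the crux, is upgrading this to $\sum_j x^{(t)}_j c_j = 0$: if this element equalled some $z \in E$ on a codeword $c$, then since $\lambda c \in \calC$ for every $\lambda \in E$ (here $E$-linearity of $\calC$, not merely $\F_q$-linearity, is essential) we get $\tr(\lambda z) = 0$ for all $\lambda$, so $z = 0$ by nondegeneracy of the trace form; thus $x^{(t)} \in \dual$. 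Finally, with $\calP_i = \{x^{(1)},\dots,x^{(l)}\}$ (distinct, since their $i$-th coordinates $\mu_t$ are), the set $\{x^{(t)}_i\}$ spans $E$ so $\dim_{\F_q}(\{x_i : x\in\calP_i\}) = l$, while $x^{(t)}_j \in \mathrm{span}_{\F_q}\{\alpha_{j,s}\}_s$ for all $t$ forces $\dim_{\F_q}(\{x_j : x\in\calP_i\}) \le b_j$, and summing gives $\sum_{j\ne i}\dim_{\F_q}(\{x_j : x\in\calP_i\}) \le b$.

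Beyond that field-of-definition subtlety, the remaining work is routine bookkeeping with the trace-form dictionary and with tracking which $\F_q$-functionals of each helper symbol the scheme needs to evaluate; I do not anticipate other obstacles.
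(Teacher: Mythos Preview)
The paper does not prove this theorem; it is quoted verbatim from~\cite{guruswami2017repairing} and used as a black box, so there is no ``paper's own proof'' to compare against. Your argument is correct and is essentially the proof given in the original Guruswami--Wootters paper: the trace-form identification of $\F_q$-linear functionals on $E$ with elements of $E$, the reconstruction of $c_i$ from the $l$ values $\tr(\mu_t c_i)$, and---in the $(1)\Rightarrow(2)$ direction---the use of $E$-linearity of $\calC$ together with nondegeneracy of the trace pairing to upgrade the $\F_q$-identity $\tr\bigl(\sum_j x^{(t)}_j c_j\bigr)=0$ to the $E$-identity $\sum_j x^{(t)}_j c_j=0$.
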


Using their characterization, the authors of~\cite{guruswami2017repairing} constructs a family of RS codes with low sub-packetization $l=\log_{n/r} n$ and a repair scheme with optimal repair bandwidth in this regime. However, the repair bandwidth is much higher than the cut-set bound, which can only be achieved for large sub-packetization.
Subsequent work~\cite{dau2017optimal,dau2018repairing} generalize the results in~\cite{guruswami2017repairing} but none of their results approaches the cut-set bound.

In a beautiful work, Tamo, Ye, and Barg~\cite{tamo2017optimal} constructed Reed-Solomon codes that are MSR, i.e., admit repair schemes with repair bandwidth meeting the cut-set bound. The sub-packetization is huge, $l = n^{O(n)}$, but they also prove a lower bound of $k^{\Omega(k)}$ for scalar MDS codes, which is even higher than the exponential lower bound for general MSR (vector MDS) codes.

Given the large sub-packetization of MSR codes which is not suitable for practical applications, Guruswami and Rawat~\cite{GR-soda17} proposed and studied codes that trade-off repair bandwidth with sub-packetization, They constructed codes with sub-packetization $l$ as small as $r=n-k$ with repair bandwdith at most twice the cut-set bound, and $l\approx r^{1/\epsilon}$ with repair bandwidth at most $(1+\epsilon)$ times the cut-set bound, i.e., bounded by $(1+\epsilon)(n-1)l/r$. In a later work, Rawat {\it et al}~\cite{RTGE} propose $\epsilon$-MSR codes where the download from \emph{each helper node} is at most $(1+\epsilon)l/r$ (so there is also load balancing across nodes). They also construct $\epsilon$-MSR codes with sub-packetization of $r^{O(r/\epsilon)} \log n$ by combining short MSR codes with long codes of large relative minimum distance. 
\begin{defn}[$\epsilon$-MSR code]
Let $\calC$ be an $[n,k]$ (scalar or vector) MDS code with sub-packetization $l$ over some base field $\F_q$. It is said to be $\epsilon$-MSR if for every  $i \in [n]$, we have $N(\calC,i,[n]\setminus \{i\}) \leq (1+\epsilon) \cdot \frac{(n-1)l}{n-k}$, with each node returning at most $(1+\epsilon) \cdot \frac{l}{n-k}$ symbols of $\F_q$ during the repair process.
\end{defn}

Given the recent developments on MSR Reed-Solomon codes (with large sub-packetization) and $\epsilon$-MSR codes (with low sub-packetization), a natural question that arises is whether we can 
combine the benefits of both these lines of work, and obtain $\epsilon$-MSR Reed-Solomon codes with low sub-packetization. This is precisely the question addressed in this work. In this paper, we provide a partial answer to this question by constructing two families of RS codes that achieve small repair bandwidth using polynomial sub-packetization in the constant rate regime of $k = \Theta(n)$. (Our constructions also work beyond this regime, but we will be focusing on the tradeoff in the constant rate regime for simplicity.)

The constructions in this paper rely on the technique of picking multiple prime numbers introduced in~\cite{tamo2017optimal}.
\begin{itemize}
\item Our first construction (Section~\ref{subsec:O(1)-MSR}) gives a family of $[n,k]$ $O(1)$-MSR RS codes with sub-packetization $O(n-k)^{O(1)}$.
\item A more careful choice of parameters leads to our second construction (\ref{subsec:O(eps)-MSR}) of a family of $[n,k]$ $\epsilon$-MSR RS codes with sub-packetization $O(n-k)^{O(1/\epsilon)}$.
\end{itemize}

Moreover, we conjecture that this tradeoff is essentially tight.
\begin{conjecture}[Tradeoff between repair bandwidth and sub-packetization]
Any $[n,k]$ $\epsilon$-MSR RS code has sub-packetization $(n-k)^{\Omega(1/\epsilon)}$ and this is tight up to a constant factor in the exponent.
\end{conjecture}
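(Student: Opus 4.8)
\medskip
\noindent\textbf{Towards a proof.}
The ``tight up to a constant factor in the exponent'' half of the conjecture is already in hand: the construction of Section~\ref{subsec:O(eps)-MSR} produces $[n,k]$ $\epsilon$-MSR Reed--Solomon codes with sub-packetization $(n-k)^{O(1/\epsilon)}$. So the remaining task is the matching lower bound, that every $[n,k]$ $\epsilon$-MSR RS code has $l\ge (n-k)^{\Omega(1/\epsilon)}$, and I sketch how one might attack it. Write $r=n-k$. By the dual-code characterization of linear repair schemes of scalar MDS codes~\cite{guruswami2017repairing}, together with the fact that the dual of an RS code with evaluation points $\alpha_1,\dots,\alpha_n\in E=\F_{q^l}$ is a generalized RS code with those same evaluation points (and that multiplication by the nonzero GRS multipliers is an $\F_q$-linear automorphism of $E$, hence affects none of the dimensions below), an $\epsilon$-MSR repair scheme for node $i$ amounts to an $\F_q$-subspace $W_i$ of the space $E[X]_{<r}$ of degree-$<r$ polynomials over $E$ with $\dim_{\F_q}\{p(\alpha_i):p\in W_i\}=l$ and $\dim_{\F_q}\{p(\alpha_j):p\in W_i\}\le(1+\epsilon)l/r$ for all $j\ne i$. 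Thus it suffices to show: distinct $\alpha_1,\dots,\alpha_n\in\F_{q^l}$ and subspaces $W_1,\dots,W_n$ with these properties cannot coexist unless $l\ge r^{\Omega(1/\epsilon)}$.

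The plan is to make the linear-algebraic lower-bound machinery for \emph{exact} MSR codes --- the RS-specific argument of Tamo--Ye--Barg~\cite{tamo2017optimal}, which already gives a super-polynomial bound at $\epsilon=0$, and the subspace-family argument of Alrabiah--Guruswami~\cite{AlrabiahGuruswami18} --- quantitatively robust to the $(1+\epsilon)$ slack. The structural facts available are: a nonzero element of $W_i$, viewed as a codeword of the Reed--Solomon code $\mathrm{RS}[n,r]$ over $E$, has at most $r-1$ zero coordinates (minimum distance $n-r+1$); and any $r$ of the evaluation functionals $p\mapsto p(\alpha_j)$ span the remaining $n-r$ over $E$ (Lagrange interpolation). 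The guiding intuition is that the upper-bound construction realizes $l$ as a product of $\Theta(1/\epsilon)$ independent ``scales'' of size $\Theta(r)$, and that this multi-scale structure should be forced: one would build a chain of coordinates $i_1,\dots,i_m$ with $m=\Theta(1/\epsilon)$ and a nested family of $\F_q$-subspaces $U_0\subseteq\cdots\subseteq U_m$ (constructed from the $W_{i_t}$ and the evaluation functionals) for which the full-span conditions at the $\alpha_{i_t}$, the small-image conditions elsewhere, and the two structural facts above force $\dim_{\F_q}U_{t+1}\ge\Omega(r)\cdot\dim_{\F_q}U_t$, while a global accounting --- crucially invoking the \emph{per-node} cap $(1+\epsilon)l/r$, not merely the total cut-set bound --- shows that each step consumes an $\Omega(\epsilon)$ fraction of the available budget, so the chain cannot exceed $\Theta(1/\epsilon)$ steps; since $\dim_{\F_q}U_m$ is bounded in terms of $l$, this gives $l\ge r^{\Omega(1/\epsilon)}$. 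A sensible first milestone is to establish merely $l\ge r^{\Omega(1)}$ for constant $\epsilon$ (that polynomial sub-packetization is unavoidable at all), then optimize the exponent's dependence on $1/\epsilon$; an entropy/Shannon-inequality argument in the spirit of the cut-set bound, made $\epsilon$-robust, is a second avenue worth trying.

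The principal obstacle is that the slack must be used \emph{globally}. No argument confined to a single repair instance can work: for a fixed $i$, since a nonzero codeword of $\mathrm{RS}[n,r]$ over $E$ cannot vanish on $r$ coordinates, $W_i$ injects into $\bigoplus_{j\in T}\{p(\alpha_j):p\in W_i\}$ for any $r$-subset $T\not\ni i$, which yields only $l\le\dim_{\F_q}W_i\le(1+\epsilon)l$ --- a constraint that says nothing about $l$. Hence the proof must exploit that all $n$ repair schemes live over the \emph{same} evaluation points, in particular the tension between ``$W_i$ has full image at $\alpha_i$'' and ``$W_{i'}$ has an image of dimension only $(1+\epsilon)l/r$ at $\alpha_i$'' for every $i'\ne i$, and must then propagate this tension along a chain of $\Omega(1/\epsilon)$ coordinates losing at most a $1-\Omega(\epsilon)$ factor per step. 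Designing the right potential and chain --- essentially the $\epsilon$-robust analogue of the Tamo--Ye--Barg and Alrabiah--Guruswami invariants, and in a regime where (unlike those arguments) the chain length is governed by $1/\epsilon$ rather than by $n/r$ --- is exactly where a genuinely new idea appears to be required, which is why we state the bound only as a conjecture.
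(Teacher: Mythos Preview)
The paper does not prove this statement: it is explicitly stated as a \emph{conjecture}, with no accompanying proof or proof sketch. Your proposal correctly recognizes this --- you establish the upper bound by pointing to the construction in Section~\ref{subsec:O(eps)-MSR} (which the paper also does implicitly via ``this is tight''), and you explicitly frame the lower bound as an open problem, closing with ``which is why we state the bound only as a conjecture.'' So there is nothing in the paper to compare against, and your write-up is consistent with the paper's treatment.

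As for the content of your sketch: the reduction via the dual-code characterization to the subspace formulation (the $W_i \subseteq E[X]_{<r}$ with full image at $\alpha_i$ and $(1+\epsilon)l/r$-bounded images elsewhere) is correct, and your observation that any single-$i$ argument collapses to the vacuous $l \le (1+\epsilon)l$ is a genuine and useful sanity check. The proposed chain/potential strategy --- an $\epsilon$-robust version of the Tamo--Ye--Barg and Alrabiah--Guruswami invariants with chain length $\Theta(1/\epsilon)$ --- is a reasonable line of attack, but it remains speculative: you have not identified the actual potential function or proved the per-step multiplicative gain, and the paper offers no further guidance. This is fine as a research outline accompanying a conjecture, but it should not be mistaken for (or presented as) a proof.
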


\begin{remark}
In our constructions, the number of helper nodes from which one needs to download information to repair the failed node might be smaller than $n-1$. Nevertheless, we are comparing our repair bandwidth with the cut-set bound in Definition~\ref{defn:cut-setbound} where the number of helper nodes is $n-1$, which is the smallest possible.
\end{remark}

\noindent \textbf{Related independent work on near-MSR RS codes.}
We notice that the question of understanding the tradeoff between repair bandwidth and sub-packetization for RS codes is also studied in a recent independent work~\cite{li2017tradeoff}. 
Li et al.~\cite{li2017tradeoff} give four different constructions of RS codes using three different schemes.
Their constructions, however, work in very different regimes and do not admit as clean and explicit tradeoff as our results.
To get an idea of how their results compare with ours, we assume $k = r = n/2$ which is a representative case of the constant rate regime of $k = \Theta(n)$.
Their first construction achieves repair bandwidth of $\frac{l}{a}(n-1)(a-s)$ when $n < q^a, r = n-k > q^s$ and $a|l$. 
This construction saves a factor of about $\frac{a-s}{2a} \geq \Omega\left(\frac{1}{\log_q n}\right)$ from the naive repair scheme but is still far away from the cut-set bound, which is only $O(1/n)$ times the bandwidth of the naive repair scheme.
This construction, however, works for logarithmically small sub-packetization $l = O(\log_q n)$ while our construction is for polynomially large $l$.
Their second construction saves almost half of the repair bandwidth compared to the naive scheme when $l \approx r\log_q n$.
This is again for small sub-packetization.
Their third construction attains the repair bandwidth of $\frac{l}{r}(n+1+r(q^a-2))$ when $n \leq (q^a -1) \log_r\frac{l}{a}$ and $a|l$.
In order for this to be within a constant factor from the cut-set bound, one needs to have $q^a - 2 = O(1)$ which implies an exponentially large sub-packetization $ \frac{l}{a} \geq r^{\Omega(n)} = \Omega(n)^{\Omega(n)}$. 
This is sub-optimal compared to our construction which only needs polynomially large sub-packetization.
The last construction in~\cite{li2017tradeoff} attains the repair bandwidth of $\frac{l}{r}(n-1 + (r-1)(q^a -2))$ when $n \leq (q^a - 1)m$ and $l/a \approx m^m$. 
Again if we are targeting at $O(1)$-MSR RS codes, we need to take $q^a - 2 = O(1)$ and this gives $m = \Omega(n)$ and an exponentially large sub-packetization $\frac{l}{a} = \Omega(n)^{\Omega(n)}$, which is far from optimal compared to our results.

\section{Constructions of Near-MSR Reed-Solomon Codes}
\subsection{Notation}
Throughout this paper, for positive integer $i$, we use $[i]$ to denote the set $\{1,\cdots,i\}$. 
$[n,k]$ is used to denote the length and the dimension of a code.
$r := n-k$ is the number of parity symbols of a code.
We use $E$ to denote the finite symbol field of the code, i.e. each coordinate of a codeword is a symbol in $E$.
We use $\F_q$ to denote the base field and $l$ the sub-packetization of $E$ over $\F_q$, i.e. $[E:\F_q] = l$.
When $F$ is a degree $t$ extension of a base field $\F_q$, the trace function $\tr_{F/\F_q}:F \rightarrow \F_q$ is defined as 
\[
\tr_{F/\F_q}(x) := x + x^{q} + x^{q^2} + \cdots + x^{q^{t-1}}. 
\]

\subsection{A family of $O(1)$-MSR Reed-Solomon codes with $O(r)^{O(1)}$ Sub-packetization}
\label{subsec:O(1)-MSR}
In this section, we give the construction of a family of Reed-Solomon codes with polynomial sub-packetization whose repair bandwidth is within a small constant times the cut set bound.
The construction given below achieves the smallest sub-packetization that can be achieved using our construction.
\begin{theorem}
\label{thm:O(1)-MSR}
Suppose $r :=  n-k =  cn$, where $c \in (0,1)$ is an arbitrary constant. There exists a family of $(n,k)$ Reed-Solomon codes for sufficiently large $n$ with sub-packetization at most $(\frac{r}{2})^{\lceil \frac{2+\delta}{c} \rceil}$, for some small constant $\delta > 0$, such that any code in this family is $((2+\delta)(1-c/2)-1)$-MSR.
\end{theorem}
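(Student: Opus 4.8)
The plan is to use the Guruswami--Wootters characterization (the theorem of~\cite{guruswami2017repairing} quoted above): to repair a failed node $i$ with bandwidth $b$ it suffices to exhibit $l$ dual codewords whose $i$-th coordinates span $E$ over $\F_q$ and the sum over $j\ne i$ of the $\F_q$-dimensions of their $j$-th coordinates is at most $b$. Since the dual of an $[n,k]$ Reed--Solomon code with (nonzero, distinct) evaluation points $\alpha_1,\dots,\alpha_n$ is a generalized RS code, a dual codeword has the form $(\lambda_j g(\alpha_j))_{j\in[n]}$ for a polynomial $g$ of degree $\le r-1$ and fixed nonzero multipliers $\lambda_j$; the relevant dimensions depend only on the values $g(\alpha_j)$, so the problem reduces to a good choice of such polynomials.

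I would build the code using the multi-prime idea of~\cite{tamo2017optimal}. Fix a small constant $\delta>0$ (chosen as a function of $c$), set $m:=\lceil(2+\delta)/c\rceil$, and choose $m$ distinct primes $p_1<\dots<p_m$, each in the interval $[(r/2)(1-o(1)),\,r/2]$ (possible for $n$ large by the prime number theorem). Put $l:=\prod_{t=1}^m p_t\le(r/2)^m$, let $q$ be a prime power with $q^2-q\ge n$ (so $q=\Theta(\sqrt n)$ works), let $E:=\F_{q^l}$, and for each $t$ let $B_t:=\F_{q^{l/p_t}}$ be the subfield of $E$ with $[E:B_t]=p_t$. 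Partition $[n]=G_1\sqcup\dots\sqcup G_m$ into nearly equal groups $|G_t|\approx n/m$, and give the nodes of $G_t$ distinct evaluation points from $\F_{q^{p_t}}\setminus\F_q$; this is possible since these sets are disjoint across $t$ (the $p_t$ are pairwise coprime) and each has size $q^{p_t}-q\ge q^2-q\ge n$. Take the RS code of dimension $k$ on these points.

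Two observations drive the analysis: (i) for $i\in G_t$ we have $\alpha_i\notin B_t$ (as $\F_{q^{p_t}}\cap B_t=\F_q$), and since $p_t$ is prime this makes $1,\alpha_i,\dots,\alpha_i^{p_t-1}$ a $B_t$-basis of $E$; (ii) for $j$ in a different group $G_{t'}$, $t'\ne t$, we have $\alpha_j\in\F_{q^{p_{t'}}}\subseteq\F_{q^{l/p_t}}=B_t$, because $p_{t'}$ divides $l/p_t=\prod_{s\ne t}p_s$. To repair $i\in G_t$, fix an $\F_q$-basis $\nu_1,\dots,\nu_{l/p_t}$ of $B_t$ and take the $l$ polynomials $g_{a,b}(X):=\nu_b\,X^{a-1}\prod_{j'\in G_t\setminus\{i\}}(X-\alpha_{j'})$ for $a\in[p_t]$ and $b\in[l/p_t]$; each has degree $(p_t-1)+(|G_t|-1)\le r-1$ because $p_t\le r/2$ and $|G_t|\le r/2+1$. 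By (i), the values $g_{a,b}(\alpha_i)=\nu_b\alpha_i^{a-1}\theta_i$ with $\theta_i:=\prod_{j'}(\alpha_i-\alpha_{j'})\ne0$ span $E$ over $\F_q$; for $j\in G_t\setminus\{i\}$ they all vanish; and for $j\in G_{t'}$ with $t'\ne t$ they equal $\nu_b\alpha_j^{a-1}\theta_j'$, whose $\F_q$-span is $\theta_j'\cdot B_t(\alpha_j)=\theta_j'B_t$ by (ii), of dimension $l/p_t$. Hence node $i$ is repairable with bandwidth $\sum_{t'\ne t}|G_{t'}|\cdot(l/p_t)=(n-|G_t|)\,l/p_t$, each contacted node sending $l/p_t$ sub-symbols.

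Against the cut-set bound $(n-1)l/r$, the overhead is $\frac{r(n-|G_t|)}{p_t(n-1)}$ and the per-node download is $l/p_t$; the last step is the routine (if delicate) check that with $p_t\ge(r/2)(1-o(1))$, $|G_t|\ge n/m-1$ and $m=\lceil(2+\delta)/c\rceil$, both quantities are at most $1+\epsilon=(2+\delta)(1-c/2)$ for $n$ large, while $l\le(r/2)^{\lceil(2+\delta)/c\rceil}$. It is here that $\delta$ cannot be arbitrarily small: $p_t\le r/2$ forces the per-node download to exceed $2l/r$, so one needs $(2+\delta)(1-c/2)\ge2$ (roughly $\delta\ge\frac{2c}{2-c}$), and the remaining slack in $\delta$ is what lets the balanced groups fit the degree budget $p_t+|G_t|\le r+1$. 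I expect the main obstacle to be observation (ii) --- engineering the evaluation points so that every foreign group lands inside the subfield $B_t$ used to repair $G_t$, which is precisely what a choice of pairwise-coprime primes $p_1,\dots,p_m$ buys, while keeping $\alpha_i\notin B_t$ for the spanning requirement (i); everything after that is the parameter bookkeeping described above.
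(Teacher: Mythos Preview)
Your proposal is correct and follows essentially the same approach as the paper: pick $m=\lceil(2+\delta)/c\rceil$ primes near $r/2$, place evaluation points in the degree-$p_t$ subextensions $\F_{q^{p_t}}$, and repair a node in group $t$ via dual polynomials $X^s\cdot(\text{vanishing product})$ combined with the trace to the index-$p_t$ subfield $B_t=\F_{q^{l/p_t}}$, using that all foreign evaluation points lie in $B_t$.

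The only cosmetic differences are in bookkeeping. The paper takes primes in $[r/(2+\delta),r/2]$ and lets its vanishing polynomial $h$ kill \emph{all} nodes outside a chosen repair set of size exactly $p_i+k-1$, so the total bandwidth is $(p_i+k-1)\,l/p_i$; you instead take primes in $[(r/2)(1-o(1)),r/2]$, let the vanishing product kill only the same-group nodes, and contact every node in the other groups, giving bandwidth $(n-|G_t|)\,l/p_t$. Both choices satisfy the degree bound $\deg\le r-1$ and land under $(2+\delta)(1-c/2)\cdot(n-1)l/r$. Your explicit observation that the per-node download $l/p_t\ge 2l/r$ forces $(2+\delta)(1-c/2)\ge 2$, i.e.\ $\delta\gtrsim 2c/(2-c)$, is a point the paper leaves implicit.
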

\begin{proof}
We start by picking $m =\lceil \frac{2+\delta}{c} \rceil$ primes
$p_1,\cdots,p_m$ in the range $[r/(2+\delta), r/2]$. This can be done
for sufficiently large $n$ (and therefore sufficiently large $r$) since the number of primes in $\{1,2,\cdots,N\}$ is roughly $\frac{N}{\log N}$ according to the prime number theorem and $m$ is only a constant. For any $i \in [m]$, let $\alpha_i$ be an
element with degree $p_i$ over the base field $\F_q$.  Denote $E =
\F_q(\alpha_1,\cdots,\alpha_m)$ and $F_i = \F_q(\{\alpha_j, j\neq
i\})$.  It follows that $E=F_i(\alpha_i)$ and $[E:F_i] = p_i$.  Let
$S_i$ be the set of conjugates of $\alpha_i$, i.e. $S_i := \{\alpha_i,
\alpha_i^q, \cdots, \alpha_i^{q^{p_i-1}}\}$.
%
Since $|\cup_{i \in [m]} S_i| \geq n$, we can pick $n$ elements from $\cup_{i \in [m]}S_i$ as
the set of evaluation points.  Denote the evaluation set by
$S=\{\beta_1,\cdots,\beta_n\}$ and the corresponding $(n,k)$
Reed-Solomon code as $\calC$.

Now we show how to repair a node corresponding to evaluation point
$\beta_{i^*} \in S_i$ for a codeword $c =(c_1,\cdots,c_n) \in \calC$.
We start by picking $p_i + k-1$ evaluation points from $S\backslash
S_i$ as our repair set $R_{i^*}$.  This can be done since
$|S\backslash S_i| \geq n - r/2 = r/2 + k \geq p_i + k-1$.  Now
consider the polynomial
$$
h(x) = \prod_{j \in [n] \backslash (R_{i^*} \cup \{i^*\})} (x-\beta_j)
$$
which vanishes at each node other than those corresponding to evaluation points in $R_{i^*} \cup \{i^*\}$.
For any $s \leq p_i-1$, the polynomial $x^s h(x)$ has degree at most $n - k -1$.
Notice that the dual code $\dual$ is a $(n,n-k)$ generalized Reed-Solomon code with some coefficients $(v_1,\cdots,v_n)$ and evaluation points $S$. 
Thus we have $(v_1\beta_1^s h(\beta_1), \cdots, v_n \beta_n^s h(\beta_n)) \in \dual$, i.e.
$$
v_{i^*} \beta_{i^*}^s h(\beta_{i^*}) c_{i^*} =  - \sum_{j \neq {i^*}} v_j h(\beta_j) \beta_j^s c_j
$$
Denote the trace $\tr_{E/F_i}$ as $\tr_i$. 
We take $\tr_i$ on both sides in the equation above
\begin{align*}
\tr_{i^*}(v_{i^*} \beta_{i^*}^s h(\beta_{i^*}) c_{i^*}) &= -\sum_{j \neq {i^*}} \tr_i(v_j \beta_j^s h(\beta_j) c_j) \\
&= - \sum_{j \neq {i^*}} \beta_j^s \tr_i(v_j h(\beta_j) c_j)
\end{align*}
where in the last equality, we use the fact that $\beta_j \notin S_i$ and therefore $\beta_j \in F_i$.
Since $v_{i^*} h(\beta_{i^*}) \neq 0$ and $\beta_{i^*} \in S_i$, $\{v_{i^*} \beta_{i^*}^s h(\beta_{i^*}) \}_{s=0,\cdots, p_i-1}$ forms a basis of $E$ over $F_i$.
Thus $c_{i^*}$ can be reconstructed from $\{ \tr_i(v_{i^*} \beta_{i^*}^s h(\beta_{i^*}) c_{i^*})\}_{s=0,\cdots, p_i-1}$.

For each node $j \neq i^*$, we only need to download one symbol $\tr_i(v_j h(\beta_j) c_j)$ from the field $F_i$, which is $1/p_i$ the number of symbols stored in node $j$.
Thus the number of symbols downloaded from each node is at most 
$$
\frac{p_i + k-1}{p_i} \cdot l \leq (2+\delta)(1-c/2) \cdot \frac{n-1}{r} \cdot l,
$$
implying that the code is $((2+\delta)(1-c/2)-1)$-MSR.
We finish the proof of the theorem by noticing that the sub-packetization is exactly 
\[
\prod_{i \in [m]} p_i \leq (r/2)^{\lceil \frac{2+\delta}{c} \rceil}. \qedhere
\]
\end{proof}

\subsection{A family of $\epsilon$-MSR Reed-Solomon codes with $O(r)^{O(1/\epsilon)}$ Sub-packetization}
\label{subsec:O(eps)-MSR}
In this section, we give a tradeoff between sub-packetization and bandwidth when the bandwidth is close to the cut-set bound. 
Specifically, when the repair bandwidth is within $(1+\epsilon)$ times the cut-set bound, our construction has sub-packetization at most $O(r)^{O(1/\epsilon)}$.  
More specifically, we have the following theorem.
\begin{theorem}
\label{thm:O(eps)-MSR}
Suppose $r := n-k =  cn$, where $c \in (0,1)$ is some arbitrary constant. For any constant $\epsilon>0$, there exists a family of $[n,k]$ Reed-Solomon codes for sufficiently large $n$ with sub-packetization at most 
\[
\left( r\cdot (1-c_1\epsilon) \right)^{\lceil \frac{1}{c c_1 \epsilon} \rceil}
\]
where 
\[
c_1 := \frac{1}{\delta+ \max\{\epsilon+1-c, 2\epsilon\}}
\]
for some small positive constant $\delta$, such that any code in this family is $\epsilon$-MSR.
\end{theorem}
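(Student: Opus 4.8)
The plan is to reuse the architecture of the proof of Theorem~\ref{thm:O(1)-MSR} --- pick a handful of primes in a window below $r$, build a tower of fields from elements of those degrees, take the evaluation points from the conjugate classes, and repair by a trace down to a subfield --- but to push the primes much closer to $r$ (rather than to $r/2$), which is what shrinks the sub-packetization, and to insert one new ingredient: I will distribute the $n$ evaluation points as evenly as possible among the conjugate classes $S_i$, so that removing any single class $S_i$ still leaves enough evaluation points to host a repair set.

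Concretely, set $m:=\lceil\frac{1}{cc_1\epsilon}\rceil$. Since $\max\{\epsilon+1-c,2\epsilon\}\ge 2\epsilon$ (and $\delta>0$) we get $c_1\epsilon<\tfrac12$, whence for $n$ large $\lceil n/m\rceil\le ncc_1\epsilon+1=rc_1\epsilon+1<r(1-c_1\epsilon)$; so the integer interval $[\,\lceil n/m\rceil,\ \lfloor r(1-c_1\epsilon)\rfloor\,]$ is nonempty, has length $\Theta(n)$, and by the prime number theorem contains at least $m$ primes (recall $m$ is a constant). Pick such primes $p_1<\cdots<p_m$ --- say the $m$ largest ones in the interval, so that $p_i\ge r(1-c_1\epsilon)-o(r)$ --- let $\alpha_i$ have degree $p_i$ over $\F_q$, and put $E=\F_q(\alpha_1,\dots,\alpha_m)$ and $F_i=\F_q(\{\alpha_j:j\ne i\})$, so $[E:F_i]=p_i$ and $l=[E:\F_q]=\prod_{i}p_i$. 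Let $S_i$ be the set of $p_i$ conjugates of $\alpha_i$; the $S_i$ are pairwise disjoint since the $p_i$ are distinct primes, and $S_j\subseteq F_i$ for $j\ne i$ while $S_i\cap F_i=\emptyset$. Since each $p_i\ge\lceil n/m\rceil$ and $m\lceil n/m\rceil\ge n$, I can choose an evaluation set $S\subseteq\bigcup_i S_i$ with $|S|=n$ and $|S\cap S_i|\le\lceil n/m\rceil$ for every $i$; let $\calC$ be the resulting $[n,k]$ Reed--Solomon code and $(v_1,\dots,v_n)$ the coefficient vector of its dual GRS code $\dual$.

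The repair of a failed node $\beta_{i^*}\in S_i$ then runs exactly as in Theorem~\ref{thm:O(1)-MSR}. First pick a repair set $R_{i^*}\subseteq S\setminus S_i$ with $|R_{i^*}|=p_i+k-1$; this is the step that uses the even spreading, since $|S\setminus S_i|\ge n-\lceil n/m\rceil$, and a one-line calculation gives $n-\lceil n/m\rceil\ge n-rc_1\epsilon-1\ge p_i+k-1$ precisely because $p_i\le r(1-c_1\epsilon)$ and $m\ge\frac{1}{cc_1\epsilon}$. With $h(x)=\prod_{j\notin R_{i^*}\cup\{i^*\}}(x-\beta_j)$, the polynomial $x^{s}h(x)$ has degree at most $n-k-1$ for $0\le s\le p_i-1$, so $(v_j\beta_j^{s}h(\beta_j))_{j\in[n]}\in\dual$; applying $\tr_{E/F_i}$ to the resulting parity check and using $\beta_j\in F_i$ for $j\in R_{i^*}$, each helper $j$ sends only the single $F_i$-symbol $\tr_{E/F_i}(v_jh(\beta_j)c_j)$, i.e.\ $l/p_i$ symbols of $\F_q$, and $c_{i^*}$ is recovered because $\{v_{i^*}\beta_{i^*}^{s}h(\beta_{i^*})\}_{s=0}^{p_i-1}$ is an $F_i$-basis of $E$.

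It then remains to check the bandwidth and the sub-packetization. Repairing $\beta_{i^*}$ downloads $(p_i+k-1)\cdot\tfrac{l}{p_i}=l\bigl(1+\tfrac{k-1}{p_i}\bigr)$ symbols in total and $l/p_i$ per helper. Substituting $k=(1-c)n$, $r=cn$, and $p_i\ge r(1-c_1\epsilon)-o(r)$, the requirement that the total be at most $(1+\epsilon)\tfrac{(n-1)l}{r}$ simplifies, up to lower-order terms, to $c+\tfrac{1-c}{1-c_1\epsilon}\le 1+\epsilon$, equivalently $c_1\le\tfrac{1}{1+\epsilon-c}$ --- which is exactly what $c_1=\tfrac{1}{\delta+\max\{\epsilon+1-c,\,2\epsilon\}}<\tfrac{1}{\epsilon+1-c}$ guarantees, the constant $\delta$ being the slack that absorbs the $o(1)$ error terms while the $2\epsilon$ branch of the maximum is what keeps $c_1\epsilon<\tfrac12$ (used above for nonemptiness of the prime window and existence of the repair sets); the per-helper count $l/p_i$ is bounded by the same estimate. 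The sub-packetization is $l=\prod_{i=1}^{m}p_i\le\bigl(r(1-c_1\epsilon)\bigr)^{m}=\bigl(r(1-c_1\epsilon)\bigr)^{\lceil 1/(cc_1\epsilon)\rceil}$. The one genuine obstacle I anticipate is exactly this bookkeeping --- choosing the $p_i$ large enough that the repair bandwidth drops to within $(1+\epsilon)$ of the cut-set bound, yet leaving room in $S\setminus S_i$ for a repair set of size $p_i+k-1$; the even distribution of evaluation points across the $S_i$, calibrated through $m$ and hence through $c_1$, is precisely the device that makes these two demands simultaneously satisfiable.
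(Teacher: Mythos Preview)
Your proposal is correct and follows essentially the same approach as the paper's proof: the same field tower built from $m=\lceil 1/(cc_1\epsilon)\rceil$ primes just below $r$, the same even distribution of evaluation points across the conjugate classes so that $|S\cap S_i|\approx c_1\epsilon r$, and the same trace-based repair using a polynomial $x^s h(x)$ vanishing outside the repair set. The only cosmetic difference is that the paper makes the lower end of the prime window explicit by introducing a second constant $c_2:=\frac{1}{\max\{\epsilon+1-c,\,2\epsilon\}}>c_1$ and taking the primes in $[r(1-c_2\epsilon),\,r(1-c_1\epsilon)]$, which turns the bandwidth check into a clean algebraic inequality with no $o(1)$ terms, whereas you take the $m$ largest primes below $r(1-c_1\epsilon)$ and let the slack $\delta$ in $c_1$ absorb the asymptotic error --- but these are interchangeable bookkeeping choices.
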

\begin{proof}
Define $c_2 := \frac{1}{\max\{\epsilon+1-c,2\epsilon\}} > c_1$ to be some slightly larger constant and $m := \lceil \frac{1}{c c_1 \epsilon} \rceil$.
We start by picking $m$ prime numbers $p_1, \cdots, p_m$ in the range $[r (1 -  c_2\epsilon), r (1 - c_1 \epsilon)]$, which can always be done when $n$ is large enough according to the prime number theorem.
Compared with the construction in Theorem~\ref{thm:O(1)-MSR}, we are picking a factor of $O(1/\epsilon)$ more prime numbers in a narrower range when $\epsilon$ is small.
It increases the sub-packetization by a factor of $O(1/\epsilon)$ on the exponent but allows us to pick only $O(\epsilon n)$ evaluation points corresponding to each prime number.
In this way, when a node that corresponds to some prime number $p_i$ crashes, we can pick more helper nodes that correspond to different primes and thus decrease the repair bandwidth.

For each $i \in [m]$, denote $\alpha_i$ an element with degree $p_i$ over the base field $\F_q$ and let $S_i := \{\alpha_i, \alpha_i^q, \cdots, \alpha_i^{q^{p_i-1}}\}$.
%
Again denote $E = \F_q(\alpha_1,\cdots,\alpha_m)$ and $F_i = \F_q(\{\alpha_j, j\neq i\})$. 
For each $i$, we pick a set of $S'_i \subset S_i$ with $\lceil c_1 \epsilon r \rceil$ elements.
This can be done because $|S'_i| = \lceil c_1 \epsilon r \rceil \leq r(1 - c_2 \epsilon) \leq p_i$.
Since $|\cup_{i \in [m]} S'_i| \geq m \cdot \lceil c c_1 \epsilon n \rceil \geq n$, we can pick a set of $n$ evaluation points in $\cup_{i \in [m]} S'_i$.
Denote the evaluation set by $S=\{\beta_1,\cdots,\beta_n\}$ and the corresponding $(n,k)$ Reed-Solomon code as $\calC$.

The way to repair a single node failure is similar to the one used in the proof of Theorem~\ref{thm:O(1)-MSR}. 
When a node corresponding to $\beta_{i^*} \in S_i$ fails, we will pick a set of $(p_i + k -1)$ nodes from $S \backslash S_i$ to transmit information to the failed node and apply the same repair scheme as in the proof of theorem~\ref{thm:O(1)-MSR}.
This set of back-up nodes can be chosen since 
\[
p_i + k-1 \leq n - \lceil c_1 \epsilon r \rceil
\]
where the RHS is exactly the number of nodes not in $S_i$.
The number of symbols downloaded from each helper node towards repair of a node in $S_i$ is thus 
\begin{align*}
\frac{p_i + k -1 }{p_i}\cdot l &\leq \frac{n -1 - c c_2 \epsilon n}{r - c c_2 \epsilon n} \cdot l\\
&\leq (1+\epsilon) \cdot \frac{n-1}{r} \cdot l \ ,
\end{align*}
implying that the code is $\epsilon$-MSR. 
The sub-packetization equals
\[
\prod_{i \in [m]} p_i \leq
\left( r\cdot (1-c_1\epsilon) \right)^{\lceil \frac{1}{c c_1 \epsilon} \rceil}\ . \qedhere
\]
\end{proof}

\bibliographystyle{plain}
\bibliography{bib}

\end{document}